\def\hlinewd#1{%
	\noalign{\ifnum0=`}\fi\hrule \@height #1 %
	\futurelet\reserved@a\@xhline}
\newcommand\scalemath[2]{\scalebox{#1}{\mbox{\ensuremath{\displaystyle #2}}}}
\theoremstyle{plain}
\newtheorem{theorem}{Theorem}
\newtheorem{lemma}[theorem]{Lemma}
\newtheorem{notation}[theorem]{Notation}
\newtheoremstyle{note}{\topsep}{\topsep}{\slshape}{}{\scshape}{}{ }{}
\theoremstyle{note}
\newtheorem{remark}[theorem]{Remark}
\newtheorem{example}[theorem]{Example}
\newcommand\tr{\operatorname{Tr}}
\renewcommand{\>}{\rangle}
\newcommand\be{\begin{equation}}
\newcommand\ee{\end{equation}}
\newcommand\bea{\begin{array}}
	\newcommand\eea{\end{array}}
\newcommand\ben{\begin{eqnarray}}
\newcommand\een{\end{eqnarray}}
\newcommand\ot{\otimes}
\newcommand\bei{\begin{itemize}}
	\newcommand\eei{\end{itemize}}
\newcommand\bee{\begin{enumerate}}
	\newcommand\eee{\end{enumerate}}
\newcommand{\spec}{\operatorname{spec}}
\par\addvspace{\medskipamount}\noindent\textbf{Examples.}\hspace{1ex}}%
\begin{document}
	\title{A Class of PPT Entangled States Arbitrary Far From  Separable States}
	
	\author{Adam Rutkowski, Micha{\l} Studzi\'nski}
	\affiliation{Institute of Theoretical Physics and Astrophysics, University of Gda\'nsk,National Quantum Information Centre, 80-952 Gda\'nsk, Poland
	}
	
	\date{\today}
	
	\begin{abstract}
		In this paper we show an explicit construction of multipartite class of entangled states with the PPT (Positive Partial Transposition) property in every cut.  We investigate properties of this class of states focusing on the trace distance from the set of separable states.  We provide an explicit sub-class of the  multipartite entangled PPT states which are arbitrary far from the set of separable states.  We argue, that in the multipartite case the mentioned distance increases with dimension of the local Hilbert space. In our construction is  we do not have to use many copies of initial state living on the smaller space to boost the trace distance as in the previous attempts to this problem.
	\end{abstract}
	
	\pacs{03.67.Dd, 03.65.Fd, 03.67.Hk}
	\keywords{separable states, PPT entangled states, separability criterions}
	
	\maketitle
	\let\oldthefootnote\thefootnote
	\renewcommand{\thefootnote}{\fnsymbol{footnote}}
	\footnotetext[1]{email: \url{studzinski.m.g@gmail.com}}
	\let\thefootnote\oldthefootnote

The problem of direct construction of PPT entangled states, which are arbitrary far from the set of separable states when increasing the local dimension, was started in the paper~\cite{Badziag_pbits}. Authors described construction which is based on the notion of the private states, which gives the link between quantum entanglement and quantum security (see~\cite{KH_phd} and references within in it).
Using directly properties of PPT private states and tensoring method - taking many copies of a given state - authors constructed class of the PPT states which are  $2-\epsilon$ far from the set of separable states,  whenever the dimension is   $2^{\operatorname{poly}\left(\operatorname{log}(1/\epsilon) \right) }$.
Going further, in the paper~\cite{Rut} authors generalised mentioned result by showing, that to obtain PPT entangled states, arbitrarily  far from the set of separable states,  there is no need to use many copies of the initial state. Namely it is enough to take appropriate mixture of the private states defined on the orthogonal subspaces. Such construction implies, that the distance from the set of separable states scales with the dimension as $d \propto 1/\epsilon^3$ for every $\epsilon >0$, and it improves scaling obtained in~\cite{Badziag_pbits}. Moreover it was shown, that this new class of  states reduces, for specific choices of parameters to many previously known examples of private states.
Here we generalise construction presented in~\cite{Rut} into multipartite case. We show, that there exist states which are maximally far from the set of separable states and PPT entangled. Direct construction of such kind of density matrices is also presented. Up to our best knowledge this is the first result of such kind with further possible applications to the analysis of multipartite linear positive maps~\cite{Glaucia,Huber}.

\section{Notation and Definitions}
\label{two}
Before the formulation of the  main problem  we have to say here a few words more about notation used in this manuscript. In this section and in our further considerations  by   $\mathcal{B}(\mathcal{H})$ we denote the algebra of all bounded linear operators on the Hilbert space $\mathcal{H}$, and we take $\mathcal{H}\cong \mathbb{C}^d$. Using this notation let us define the following set:
\be
\label{Qset}
\mathcal{S}(\mathcal{H})=\{\rho \in \mathcal{B}(\mathcal{H}) \ | \ \rho \geq 0, \  \tr \rho=1\},
\ee
which is the set of all states on a space $\mathcal{H}$. Let us now suppose, that we are given with the state $\rho_{A_1\cdots A_m}$ of $m$ subsystems $A_1,\ldots,A_m$ defined on the Hilbert space $\mathcal{H}_{A_1\cdots A_m}=\mathcal{H}_{A_1}\ot\cdots \ot \mathcal{H}_{A_m}$. Defining the partition $p\equiv \left\lbrace X_1,\ldots,X_k \right\rbrace $, where $X_i$ are disjoint subsets of the indices $X=\{1,\ldots,m\}$, $ \ \cup_{r=1}^k X_r=X$, we can say, that the state $\rho_{A_1\cdots A_m}$ is separable with respect to the partition $p$, if and only if it can be written as
\be
\label{sep1}
\rho_{A_1\cdots A_m}=\sum_{i=1}^Mp_i\rho_{1}^i\ot \cdots \ot \rho_k^i,\qquad \forall i \ p_i \geq 0,\qquad \sum_i p_i=1.
\ee
It is easy to see from the above definition that if $k=n$ we obtain definition of fully separable state, and for $k=m=2$,~\cref{sep1} reduces to the  standard bipartite definition of separability. In this paper we consider the multiparitite density matrices on $\mathcal{B}\left(\mathcal{H}_A^{\ot n}\ot \mathcal{H}_B^{\ot n} \right)$ with $\operatorname{dim}\mathcal{H}_A=d_A$  and $\operatorname{dim}\mathcal{H}_B=d_B$.

In the previous papers treating on the similar problem authors deal with  separable states acting on $\mathcal{H}_{A_1A_2B_1B_2}$ in the cut $A_1A_2:B_1B_2$. In our multipartite approach ($n \geq 2$)  the separable states will be denoted as $\mathcal{SEP}^{(\overline{A},\overline{B})}\equiv \mathcal{\overline{SEP}}$, where $\overline{A}=A_1A_2\ldots A_n$  and similarly for $\overline{B}$.
Moreover, in the further part of this manuscript, whenever we talk about the distance from the set of the separable states we mean the following:
\begin{notation}
	Suppose that we are given with a quantum state $\rho \in \mathcal{B}\left(\mathcal{H}_{\bar{A}}\ot \mathcal{H}_{\bar{B}} \right) $ and the set of separable states $\mathcal{\overline{SEP}}$, then by $\operatorname{dist}\left( \rho,\mathcal{\overline{SEP}}\right)  $ we understand the following quantity
	\be
	\label{dd}
	\operatorname{dist}\left( \rho,\mathcal{\overline{SEP}}\right) =\frac{1}{2}\mathop{\min}\limits_{\sigma \in \mathcal{\overline{SEP}}}||\rho-\sigma||_1,
	\ee
	which is the minimal trace distance. In~\cref{dd} by $||\cdot||_1$ we understand the trace norm which is defined as  $||X||_1\equiv \tr\left[XX^{\dagger} \right]^{1/2}$  for all matrices $X$.
\end{notation}
As we see later, restriction to bi-separability is not a limiting factor in our problem. 

\section{Construction Method}
\label{method}
The main goal of this section is the construction a class of the multipartite entangled operators
\be
\label{1}
\rho \in \mathcal{B}\left(\mathcal{H}_A^{\ot n}\ot \mathcal{H}_B^{\ot n} \right) ,
\ee
having  PPT property with respect to any cut of subsystems and are arbitrary far from the set of fully separable states.

Let  us define $\forall_k  \ \alpha_k \in \{0,1\}$ the binary vector $\mathbold{\alpha}\equiv \left(\alpha_i,\ldots,\alpha_{n-1} \right)$, such that $\mathbold{\alpha} \neq \bf{0}$.
Secondly we define the operators of the partial transposition with respect to an arbitrary  cut as
\be
\tau_{\mathbold{\alpha}}\equiv \text{\noindent
	\(\mathds{1}\)}\ot \operatorname{T}^{\alpha_1} \ot \cdots \ot \operatorname{T}^{\alpha_{n-1}},
\ee
where $\operatorname{T}^{\alpha_k}$ for $1\leq k \leq n-1$ denotes standard operation of transposition on $k^{\text{th}}$ subsystem, whenever $\alpha_k\neq 0$. If for some $k$ there is $\alpha_k=0$ we define $T^{\alpha_k}=\text{\noindent
	\(\mathds{1}\)}_k$.
It is easy to see, that in total we have 
\be
\label{N}
N=\sum_{k=1}^{n-1}\binom{n-1}{k}=2^{n-1}-1
\ee
such operators. Next we introduce the operators
\be
E_{ij}^{\ot n}\equiv \underbrace{e_{ij}\ot \cdots \ot e_{ij}}_{n},
\ee
where $\left\lbrace e_{ij}\right\rbrace_{i,j=1}^d$, with $e_{ij}=|i\>\<j|$ denotes the basis of $n \times n$ complex matrices $M(n,\mathbb{C})$.  Let us consider the following  mixture:
\be
\label{sum1}
\rho=\sum_{l=0}^{ND}\rho^{(l)} \in \mathcal{B}\left(\mathcal{H}_A^{\ot n}\ot \mathcal{H}_B^{\ot n} \right),
\ee
where $D=\frac{1}{2}d_A(d_A-1)$. In the sum given by~\cref{sum1} we can distinguish state $\rho^{(0)}$ in the maximally entangled form
\be
\label{max_form}
\rho^{(0)}=\sum_{i,j=1}^{d_A} E_{ij}^{\ot n}\ot a_{ij}^{(0)},\qquad a_{ij}^{(0)}\in \mathcal{B}\left(\mathcal{H}_B^{\ot n} \right).
\ee
Herein we define the rest of $dN$  operators in such a way, that whole operator $\rho$ admits the PPT property. Observe that $\forall l\neq 0$ there exists a bijection $f:(\mathbold{\alpha},i,j)\rightarrow l$ which allows us to define off-diagonal term of $l^{\text{th}}-$ operators as
\be
\rho_{12}^{(l)}=\tau_{\alpha}\left(E_{ij}^{\ot n} \right) \ot a_{12}^{(l)}, \quad \text{where}\quad a_{12}^{(l)}\in\mathcal{B}\left(\mathcal{H}^{\ot n}_B \right),
\ee
and for given $l$ we have
\be
\rho^{(l)}=\sum_{i,j=1}^2\rho_{ij}^{(l)}.
\ee
Having the form of $\rho_{12}^{(l)}$, together with all sub-blocks $a_{21}^{(l)}, a_{22}^{(l)}, a_{11}^{(l)}\in \mathcal{B}\left(\mathcal{H}_B^{\ot n} \right) $ we have full information about all $\rho^{(l)}$, so we know the structure of $\rho$ from~\cref{sum1}.
 Example below illustrates how above construction works in  practice.
\begin{example}
	Let us consider the case when $n=3$ and the local dimension of the Hilbert space is $d_A=3$, then the total state $\rho \in \mathcal{B}\left( \mathcal{H}_A^{\ot 3} \ot \mathcal{H}_B^{\ot 3}\right) $ from~\cref{1} can be represented as:	
	\begin{widetext}
\be
	\rho=\left(\scalemath{0.65}{\begin{array}{ccc|ccc|ccc !{\vrule width 1.5pt} ccc|ccc|ccc !{\vrule width 1.5pt} ccc|ccc|ccc}
		a_{11}^{(0)} & \cdot & \cdot & \cdot & \cdot& \cdot & \cdot& \cdot & \cdot& \cdot & \cdot& \cdot & \cdot & a_{12}^{(0)} & \cdot & \cdot& \cdot & \cdot& \cdot & \cdot& \cdot & \cdot& \cdot & \cdot & \cdot & \cdot & a_{13}^{(0)}\\
		\cdot & a_{11}^{(1)} & \cdot & \cdot & \cdot& \cdot & \cdot& \cdot & \cdot& \cdot & \cdot& \cdot & a_{12}^{(1)} & \cdot & \cdot& \cdot & \cdot& \cdot & \cdot& \cdot & \cdot& \cdot & \cdot & \cdot & \cdot & \cdot & \cdot \\
		\cdot & \cdot & a_{11}^{(2)} & \cdot & \cdot& \cdot & \cdot& \cdot & \cdot& \cdot & \cdot& \cdot & \cdot & \cdot & \cdot & \cdot & \cdot & \cdot& \cdot & \cdot& \cdot & \cdot& \cdot & \cdot & a_{12}^{(2)} & \cdot & \cdot\\
		\hline
		\cdot & \cdot & \cdot & a_{11}^{(3)} & \cdot& \cdot & \cdot& \cdot & \cdot& \cdot & a_{12}^{(3)}& \cdot & \cdot & \cdot & \cdot& \cdot & \cdot& \cdot & \cdot& \cdot & \cdot& \cdot & \cdot & \cdot & \cdot & \cdot & \cdot \\
		\cdot & \cdot & \cdot & \cdot & a_{11}^{(4)}& \cdot & \cdot& \cdot & \cdot& a_{12}^{(4)} & \cdot& \cdot & \cdot & \cdot & \cdot & \cdot & \cdot & \cdot& \cdot & \cdot& \cdot & \cdot& \cdot & \cdot & \cdot & \cdot & \cdot\\
			\cdot & \cdot & \cdot & \cdot & \cdot& \cdot & \cdot& \cdot & \cdot& \cdot & \cdot& \cdot & \cdot & \cdot & \cdot & \cdot& \cdot & \cdot& \cdot & \cdot& \cdot & \cdot& \cdot & \cdot & \cdot & \cdot & \cdot\\
			\hline
			\cdot & \cdot & \cdot & \cdot & \cdot& \cdot & a_{11}^{(5)}& \cdot & \cdot& \cdot & \cdot& \cdot & \cdot & \cdot & \cdot & \cdot& \cdot & \cdot& \cdot & \cdot& a_{12}^{(5)} & \cdot& \cdot & \cdot & \cdot & \cdot & \cdot \\
			\cdot & \cdot & \cdot & \cdot & \cdot& \cdot & \cdot& \cdot & \cdot& \cdot & \cdot& \cdot & \cdot & \cdot & \cdot& \cdot & \cdot& \cdot & \cdot& \cdot & \cdot& \cdot & \cdot & \cdot & \cdot & \cdot & \cdot \\
			\cdot & \cdot & \cdot & \cdot & \cdot& \cdot & \cdot& \cdot & a_{11}^{(6)}& \cdot & \cdot& \cdot & \cdot & \cdot & \cdot & \cdot & \cdot & \cdot& a_{12}^{(6)} & \cdot& \cdot & \cdot& \cdot & \cdot & \cdot & \cdot & \cdot\\
				\hlinewd{2pt}
			\cdot & \cdot & \cdot & \cdot & a_{21}^{(4)}& \cdot & \cdot& \cdot & \cdot& a_{22}^{(4)} & \cdot& \cdot & \cdot & \cdot & \cdot & \cdot& \cdot & \cdot& \cdot & \cdot& \cdot & \cdot& \cdot & \cdot & \cdot & \cdot & \cdot \\
			\cdot & \cdot & \cdot & a_{21}^{(3)} & \cdot& \cdot & \cdot& \cdot & \cdot& \cdot & a_{22}^{(3)}& \cdot & \cdot & \cdot & \cdot& \cdot & \cdot& \cdot & \cdot& \cdot & \cdot& \cdot & \cdot & \cdot & \cdot & \cdot & \cdot \\
			\cdot & \cdot & \cdot & \cdot & \cdot& \cdot & \cdot& \cdot & \cdot& \cdot & \cdot& \cdot & \cdot & \cdot & \cdot & \cdot & \cdot & \cdot& \cdot & \cdot& \cdot & \cdot& \cdot & \cdot & \cdot & \cdot & \cdot\\
			\hline
			\cdot & a_{21}^{(1)} & \cdot & \cdot & \cdot& \cdot & \cdot& \cdot & \cdot& \cdot & \cdot& \cdot & a_{22}^{(1)} & \cdot & \cdot & \cdot& \cdot & \cdot& \cdot & \cdot& \cdot & \cdot& \cdot & \cdot & \cdot & \cdot & \cdot \\
			a_{21}^{(0)} & \cdot & \cdot & \cdot & \cdot& \cdot & \cdot& \cdot & \cdot& \cdot & \cdot& \cdot & \cdot & a_{22}^{(0)} & \cdot& \cdot & \cdot& \cdot & \cdot& \cdot & \cdot& \cdot & \cdot & \cdot & \cdot & \cdot & a_{23}^{(0)} \\
			\cdot & \cdot & \cdot & \cdot & \cdot& \cdot & \cdot& \cdot & \cdot& \cdot & \cdot& \cdot & \cdot & \cdot & a_{11}^{(7)} & \cdot & \cdot & \cdot& \cdot & \cdot& \cdot & \cdot& \cdot & \cdot & \cdot & a_{12}^{(7)} & \cdot\\
			\hline
			\cdot & \cdot & \cdot & \cdot & \cdot& \cdot & \cdot& \cdot & \cdot& \cdot & \cdot& \cdot & \cdot & \cdot & \cdot & \cdot& \cdot & \cdot& \cdot & \cdot& \cdot & \cdot& \cdot & \cdot & \cdot & \cdot & \cdot \\
			\cdot & \cdot & \cdot & \cdot & \cdot& \cdot & \cdot& \cdot & \cdot& \cdot & \cdot& \cdot & \cdot & \cdot & \cdot& \cdot & a_{11}^{(8)}& \cdot & \cdot& \cdot & \cdot& \cdot & \cdot & a_{12}^{(8)} & \cdot & \cdot & \cdot \\
			\cdot & \cdot & \cdot & \cdot & \cdot& \cdot & \cdot& \cdot & \cdot& \cdot & \cdot& \cdot & \cdot & \cdot & \cdot & \cdot & \cdot & a_{11}^{(9)}& \cdot & \cdot& \cdot & \cdot& a_{12}^{(9)} & \cdot & \cdot & \cdot & \cdot\\
		\hlinewd{2pt}
			\cdot & \cdot & \cdot & \cdot & \cdot& \cdot & \cdot& \cdot & a_{21}^{(6)}& \cdot & \cdot& \cdot & \cdot & \cdot & \cdot & \cdot& \cdot & \cdot& a_{22}^{(6)} & \cdot& \cdot & \cdot& \cdot & \cdot & \cdot & \cdot & \cdot \\
			\cdot & \cdot & \cdot & \cdot & \cdot& \cdot & \cdot& \cdot & \cdot& \cdot & \cdot& \cdot & \cdot & \cdot & \cdot& \cdot & \cdot& \cdot & \cdot& \cdot & \cdot& \cdot & \cdot & \cdot & \cdot & \cdot & \cdot \\
			\cdot & \cdot & \cdot & \cdot & \cdot& \cdot & a_{21}^{(5)}& \cdot & \cdot& \cdot & \cdot& \cdot & \cdot & \cdot & \cdot & \cdot & \cdot & \cdot& \cdot & \cdot & a_{22}^{(5)} & \cdot& \cdot & \cdot & \cdot & \cdot & \cdot\\
			\hline
			\cdot & \cdot & \cdot & \cdot & \cdot& \cdot & \cdot& \cdot & \cdot& \cdot & \cdot& \cdot & \cdot & \cdot & \cdot & \cdot& \cdot & \cdot& \cdot & \cdot& \cdot & \cdot& \cdot & \cdot & \cdot & \cdot & \cdot \\
			\cdot & \cdot & \cdot & \cdot & \cdot& \cdot & \cdot& \cdot & \cdot& \cdot & \cdot& \cdot & \cdot & \cdot & \cdot& \cdot & \cdot & a_{21}^{(9)} & \cdot& \cdot & \cdot& \cdot  & a_{22}^{(9)} & \cdot & \cdot & \cdot & \cdot \\
			\cdot & \cdot & \cdot & \cdot & \cdot& \cdot & \cdot& \cdot & \cdot& \cdot & \cdot& \cdot & \cdot & \cdot & \cdot & \cdot & a_{21}^{(8)} & \cdot& \cdot & \cdot& \cdot & \cdot& \cdot & a_{22}^{(8)} & \cdot & \cdot & \cdot\\
			\hline
			\cdot & \cdot & a_{21}^{(2)} & \cdot & \cdot& \cdot & \cdot& \cdot & \cdot& \cdot & \cdot& \cdot & \cdot & \cdot & \cdot & \cdot& \cdot & \cdot& \cdot & \cdot& \cdot & \cdot& \cdot & \cdot & a_{22}^{(2)} & \cdot & \cdot \\
			\cdot & \cdot & \cdot & \cdot & \cdot& \cdot & \cdot& \cdot & \cdot& \cdot & \cdot& \cdot & \cdot & \cdot & a_{21}^{(7)}& \cdot & \cdot& \cdot & \cdot& \cdot & \cdot& \cdot & \cdot & \cdot & \cdot & a_{22}^{(7)} & \cdot \\
			a_{31}^{(0)} & \cdot & \cdot & \cdot & \cdot& \cdot & \cdot& \cdot & \cdot& \cdot & \cdot& \cdot & \cdot & a_{32}^{(0)} & \cdot & \cdot & \cdot & \cdot& \cdot & \cdot& \cdot & \cdot& \cdot & \cdot & \cdot & \cdot & a_{22}^{(0)}\\
		\end{array}}\right),
	\ee
	\end{widetext}
	where dots denote zeros and $a_{ij}^{(l)}\in \mathcal{B}\left(\mathcal{H}_B^{\ot 3} \right)$.
\end{example}
\begin{remark}
From all considerations presented in this section, we see that in particular case, when $n=2$, for particular choice of parameters, we reduce our class of operators to already known examples of pbits ($d_A=2$) and pdits ($d_A=3$), where now $d_A$ plays role of dimension of the key part. For the detailed description we refer reader to~\cite{KH_phd}.
\end{remark}

\begin{remark}
Let us notice that in the recent paper~\cite{Huber} authors consider the similar operator as in~\eqref{sum1} (in unnormalised form it is equivalent to our operator up to swap operation with respect to subsystems) to show that some linear positive maps are non-decomposable.
\end{remark}

\section{Properties of given subclass of states}
To provide our main result we need to consider the following mixture of states:
\be
\label{form}
\rho = p\rho^{(0)}+\frac{q}{ND}\sum_{l=1}^{ND}\rho^{(l)},
\ee
where $p+q=1$, $D=\frac{1}{2}d_A(d_A-1)$ and $N$ is given through eq.~\eqref{N}. Here and later we assume operators $\rho^{(0)},\rho^{(l)}$ to be normalised. Let us notice that  states given in~\cref{form} belongs to the class of the states defined in~\eqref{sum1}. Our first results says:

\begin{lemma}
	\label{q}
	Let us assume that we are given with $\rho\in \mathcal{B}\left(\mathcal{H}_A^{\ot n}\ot \mathcal{H}_B^{\ot n} \right) $ as in~\cref{form}, and the state $\rho^{(0)}$  is in entangled form~\eqref{max_form}, then the following statement holds
	\be
	||\rho-\rho^{(0)}||_1=2q.
	\ee	
\end{lemma}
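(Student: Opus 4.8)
The plan is to exploit linearity together with the special block structure of the summands, reducing the claim to an orthogonality-of-supports statement. First I would use $p+q=1$ to rewrite the difference. Setting $\sigma \equiv \frac{1}{ND}\sum_{l=1}^{ND}\rho^{(l)}$, we have
\[
\rho - \rho^{(0)} = (p-1)\rho^{(0)} + \frac{q}{ND}\sum_{l=1}^{ND}\rho^{(l)} = q\left(\sigma - \rho^{(0)}\right),
\]
so that pulling the scalar $q>0$ out of the trace norm gives $||\rho-\rho^{(0)}||_1 = q\,||\sigma - \rho^{(0)}||_1$. Since each $\rho^{(l)}$ is positive and normalised, $\sigma$ is a convex combination of states and hence itself a density matrix with $\tr\sigma=1$. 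It therefore suffices to prove that $\sigma$ and $\rho^{(0)}$ are perfectly distinguishable, i.e. $||\sigma - \rho^{(0)}||_1 = 2$.

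The heart of the argument is to show that $\sigma$ and $\rho^{(0)}$ have orthogonal supports. On the $A$-side the range of $\rho^{(0)}$ lies in the aligned subspace $V_0\ot\mathcal{H}_B^{\ot n}$, where $V_0=\operatorname{span}\{\,|i\cdots i\rangle : 1\le i\le d_A\,\}$, because $\rho^{(0)}=\sum_{ij}E_{ij}^{\ot n}\ot a_{ij}^{(0)}$ only couples basis vectors of the all-equal form $|i\cdots i\rangle$. For $l\ge 1$ each $\rho^{(l)}$ is built from $\tau_{\balpha}(E_{ij}^{\ot n})$ with $\balpha\neq\bzero$ and $i\neq j$; applying $\tau_{\balpha}$ to $|i\cdots i\rangle\langle j\cdots j|$ interchanges the ket and bra symbols precisely on those subsystems where $\alpha_k=1$, producing computational basis vectors that carry both symbols $i$ and $j$. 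As $i\neq j$, none of these vectors is of the all-equal form, so every basis vector in the support of $\rho^{(l)}$ is orthogonal to $V_0$. Hence the support of $\sigma$ is orthogonal to that of $\rho^{(0)}$.

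Finally, for two density matrices with orthogonal supports the spectrum of $\sigma-\rho^{(0)}$ consists of the nonnegative eigenvalues of $\sigma$ together with the nonpositive negatives of the eigenvalues of $\rho^{(0)}$, so that $||\sigma-\rho^{(0)}||_1=\tr\sigma+\tr\rho^{(0)}=2$. Combined with the first step this yields $||\rho-\rho^{(0)}||_1=2q$. I expect the main obstacle to be the orthogonality step: one must verify carefully, directly from the bijection $f:(\balpha,i,j)\to l$ and the action of $\tau_{\balpha}$, that no diagonal or off-diagonal entry of any $\rho^{(l)}$ with $l\ge 1$ can land on an aligned vector $|i\cdots i\rangle$ — which is exactly what the hypothesis $\balpha\neq\bzero$ (at least one nontrivial transposition together with $i\neq j$) guarantees.
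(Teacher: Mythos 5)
Your proof is correct and follows essentially the same route as the paper: factor out $q$ and use the fact that $\rho^{(0)}$ and the normalized mixture $\sigma=\frac{1}{ND}\sum_{l\ge 1}\rho^{(l)}$ are positive with orthogonal supports, so the trace norm of their difference equals the sum of their traces. The only difference is that you actually verify the orthogonality of supports (via the action of $\tau_{\balpha}$ on $E_{ij}^{\ot n}$ with $i\neq j$, $\balpha\neq\bzero$), whereas the paper simply asserts it.
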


\begin{proof}
	We prove above statement by direct calculations, namely we have
	\be
	\begin{split}
	||\rho-\rho^{(0)}||_1&=\left| \left| p\rho^{(0)}+\frac{q}{ND}\sum_{l=1}^{ND}\rho^{(l)}-\rho^{(0)}\right| \right| _1\\
	&=\frac{q}{ND}\left| \left| \sum_{l=1}^{ND}\rho^{(l)}-ND\rho^{(0)}\right| \right| _1.
	\end{split}
	\ee
	At this point we use definition of the trace norm and rewrite above equation as
	\be
	\label{d1}
	\begin{split}
	&||\rho-\rho^{(0)}||_1=\\
	&=\frac{q}{ND}\tr\left[\left(\sum_{l=1}^{ND}\rho^{(l)}-ND\rho^{(0)} \right) \left(\sum_{l=1}^{ND}\rho^{(l)}-ND\rho^{(0)} \right)^{\dagger}  \right]^{1/2}.
	\end{split}
	\ee
	Because all operators in~\cref{d1} are hermitian and supported on orthogonal subspaces, we reduce~\cref{d1} to
	\be
	||\rho-\rho^{(0)}||_1=\frac{q}{ND}\tr\left[\left(\sum_{l=1}^{ND}\rho^{(l)}+ND\rho^{(0)} \right)^2 \right]^{1/2}.
	\ee
	Making further simplification, finally we obtain:
	\be
	||\rho-\rho^{(0)}||_1=\frac{q}{ND}\tr\left[\sum_{l=1}^{ND}\rho^{(l)}+ND\rho^{(0)} \right]=2q.
	\ee
\end{proof}
Before we go further in our considerations, we make here some specific choice of the interior structure for the states from~\cref{form}. Namely, we assume the following PPT-invariant form of the sub-blocks:
\be
\label{s1}
\begin{split}
&a_{ij}^{(0)}=a=\frac{1}{d_Ad_B^{n}}\begin{pmatrix}
1 & 0 & \cdots & 0 & 0\\
0 & 1 & \cdots & 0 & 0\\
\vdots & \vdots & \ddots & \vdots & \vdots \\
0 & 0 & \cdots & 1 & 0\\
0 & 0 & \cdots & 0 & 1
\end{pmatrix}_{d_B^n \times d_B^n}\\ &a_{rs}^{(l)}=b=\frac{1}{2d_B^{n}}\begin{pmatrix}
1 & 1 & \cdots & 1 & 1\\
1 & 1 & \cdots & 1 & 1\\
\vdots & \vdots & \ddots & \vdots & \vdots \\
1 & 1 & \cdots & 1 & 1\\
1 & 1 & \cdots & 1 & 1
\end{pmatrix}_{d_B^n \times d_B^n},
\end{split}
\ee
where $1\leq i,j \leq d_A$ and  for every $ 1\leq l\leq ND$ we have $1 \leq r,s \leq 2$. Matrices $a,b$ have the following spectra:
\be
\label{spectra}
\begin{split}
&\spec\left(a\right)=\left\lbrace \frac{1}{d_Ad_B^n},\ldots, \frac{1}{d_Ad_B^n} \right\rbrace,\\
&\spec\left(b \right) =\left\lbrace \frac{1}{2},0,\ldots,0 \right\rbrace,
\end{split}
\ee
where $0$ in $\spec (b)$ is taken $d_B^n-1$ times.
Now we are in the position to formulate and prove the following:
\begin{lemma}
	\label{sq}
	Let us consider class of states $\rho$ given by~\cref{form}, together with specific choice of sub-blocks from~\cref{s1}. Then the parameter $q$ describing the trace distance in Lemma~\ref{q} is equal to
	\be
	q=\frac{1}{1+\frac{d_B^n}{N(d_A-1)}},
	\ee
	where $N$ is given by expression~\eqref{N}.
\end{lemma}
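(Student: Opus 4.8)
The normalization $\tr\rho=1$ and positivity $\rho\ge0$ hold automatically, since $\rho$ is a convex combination of the normalized states $\rho^{(0)},\rho^{(l)}\ge0$; what actually pins down $q$ is the requirement that $\rho$ stay PPT, i.e.\ that $\rho^{\tau_{\balpha}}\ge0$ for every admissible cut $\balpha$, with $q$ taken as large as possible so that the trace distance $2q$ of Lemma~\ref{q} is maximal. The plan is therefore to write out a single representative partial transpose $\rho^{\tau_{\balpha}}$, locate the one block whose positivity is binding, and saturate it.

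First I would record how $\tau_{\balpha}$ acts on the two ingredients separately. Because the chosen sub-blocks $a\propto\mathds{1}$ and $b\propto J$ (the all-ones matrix) are symmetric, they are left invariant by transposition of the $B$-factors — this is exactly the ``PPT-invariant'' property emphasized after~\eqref{s1} — so all the action takes place on the $A$-indices. There $\tau_{\balpha}$ fixes every diagonal operator $E_{ii}^{\ot n}$ and, through the bijection $f:(\balpha,i,j)\mapsto l$, interchanges the maximally entangled coherence $E_{ij}^{\ot n}$ carried by $\rho^{(0)}$ with the coherence $\tau_{\balpha}(E_{ij}^{\ot n})$ carried by the matching $\rho^{(l)}$. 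Consequently, after transposition the coherence sitting at the ``GHZ'' location $\bigl(|i\cdots i\rangle,|j\cdots j\rangle\bigr)$ is supplied solely by that one $\rho^{(l)}$, namely $\tfrac{q}{ND}b$, while the two diagonal entries there are supplied solely by $\rho^{(0)}$, namely $p\,a$ in each slot — no $\rho^{(l)}$ can put weight on a constant string $i\cdots i$.

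The binding block is thus the $2d_B^n\times2d_B^n$ matrix
\be
\begin{pmatrix} p\,a & \tfrac{q}{ND}\,b\\[2pt] \tfrac{q}{ND}\,b & p\,a \end{pmatrix},
\ee
which I would diagonalize in the common eigenbasis of $a$ and $b$. On the uniform vector one has $a\mapsto\tfrac{1}{d_Ad_B^n}$ and $b\mapsto\tfrac12$, whereas on its orthogonal complement $b\mapsto0$ and the block is automatically positive. The only nontrivial condition is therefore $\tfrac{p}{d_Ad_B^n}\ge\tfrac{q}{2ND}$, and the extremal (largest-distance) PPT state saturates it. Inserting $2ND=N d_A(d_A-1)$ and $p=1-q$ gives $(1-q)N(d_A-1)=q\,d_B^n$, i.e.\ the claimed $q=\bigl(1+\tfrac{d_B^n}{N(d_A-1)}\bigr)^{-1}$.

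The step I expect to be the real work is the combinatorial bookkeeping of the second paragraph: checking that $\tau_{\balpha}$ permutes the coherences exactly as claimed and, crucially, that the GHZ block decouples — the diagonal receiving contributions only from $\rho^{(0)}$ and the off-diagonal only from the single $\rho^{(l)}$ selected by $f$. This rests on the bijection between the $N=2^{n-1}-1$ cuts and the complementary pairs of non-constant strings over $\{i,j\}$, which guarantees that no two distinct terms of the mixture overlap on the relevant $2\times2$ pattern; once that is in place, the spectral reduction and the remaining algebra are routine.
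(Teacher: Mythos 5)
Your computation on the ``GHZ'' block is exactly the paper's condition~\eqref{Y} restricted to the uniform vector, and the closing algebra ($p\lambda(a)=x\lambda(b)$ with $\lambda(a)=1/(d_Ad_B^n)$, $\lambda(b)=1/2$, $2ND=Nd_A(d_A-1)$) coincides with the paper's. But your argument proves strictly less than the paper's, and the missing piece is precisely the step you defer to ``combinatorial bookkeeping.'' The paper does not obtain $q$ by maximizing the distance subject to a one-sided PPT constraint: it imposes positivity of \emph{two} distinct block types, \eqref{X} and \eqref{Y}. One is your block $\bigl(\begin{smallmatrix} pa & xb\\ xb & pa\end{smallmatrix}\bigr)$ at the constant strings, with diagonal fed by $\rho^{(0)}$ and off-diagonal by the matched $\rho^{(l)}$; the other is the complementary block $\bigl(\begin{smallmatrix} xb & pa\\ pa & xb\end{smallmatrix}\bigr)$ sitting at the non-constant pair $\{v,\bar v\}$ selected by the cut, whose diagonal is fed by $\rho^{(l)}$ and whose off-diagonal receives the \emph{transposed coherence of} $\rho^{(0)}$. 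On the uniform $B$-vector the first gives $p\lambda(a)\ge x\lambda(b)$ and the second gives $x\lambda(b)\ge p\lambda(a)$; their conjunction forces equality, which is why the lemma asserts that $q$ \emph{equals} the stated value rather than being bounded by it, and why no extremality principle is needed.

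Concretely, two things go wrong in your write-up. First, by examining only the GHZ block you obtain only $q\le q^*$ and must import a ``largest-distance'' selection rule that is not part of the statement; the paper's conditions~\eqref{or} determine $q$ outright. Second, your assertion that the GHZ block is ``the one block whose positivity is binding,'' together with the claim that everything else is ``automatically positive,'' is not merely unverified but inconsistent with the paper's own analysis: the $\{v,\bar v\}$ block is equally binding and supplies the opposite inequality, and your spectral check (``on the orthogonal complement $b\mapsto 0$ and the block is automatically positive'') is specific to blocks with $pa$ on the diagonal --- applied to the block with $xb$ on the diagonal and $pa$ off the diagonal it yields eigenvalues $\pm p\lambda(a)$ on the complement of the uniform vector, so it does not go through. (Your correct observation that $\rho\ge 0$ holds automatically as a convex combination does show the paper's attribution of~\eqref{X} to the un-transposed state is loose, but the condition itself is still part of the paper's derivation of $q$.) In short: right formula, right computation on one block, but the exhaustive survey of blocks that you yourself flag as ``the real work'' is genuinely needed and is where your proof is incomplete.
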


\begin{proof}	
	To get the result we need to consider positivity of the state $\rho$ from~\cref{form} before and after partial transposition with respect to every possible cut. Then we use the property that all sub-blocks given in~\cref{s1} are PPT invariant, so their spectra remain unchanged after that partial transposition. Since all the states taken for the construction of $\rho$ are supported on the orthogonal subspaces checking positive semidefiniteness  of $\rho$ is equivalent to checking the  positive semidefiniteness of the following matrix
	\be
	\label{X}
	X=\begin{pmatrix}
	xb & pa\\
	pa & xb
	\end{pmatrix}\geq 0,
	\ee
		where $x=\frac{q}{ND}$, $p=1-q$, $D=d_A(d_A-1)/2$ and real matrices $a,b$ are given in by~\cref{s1}. Using analogous observation we deduce condition for positive semidefiniteness of $\rho$ after partial transposition:
	\be
	\label{Y}
	Y=\begin{pmatrix}
		pa & xb & \cdots & xb \\
		xb & pa & \cdots & xb \\
		\vdots & \vdots & \ddots & \vdots\\
		xb & xb & \cdots & pa
	\end{pmatrix} \geq 0.
	\ee
 Matrices $X,Y$ can be decomposed as
	\be
	\label{decomp}
	\begin{split}
		X&=\text{\noindent
			\(\mathds{1}\)}_2\ot xb + \mathbb{I}_2\ot pa - \text{\noindent
			\(\mathds{1}\)}_2\ot pa,\\
		Y&=\text{\noindent
			\(\mathds{1}\)}_n\ot pa + \mathbb{I}_n\ot xb - \text{\noindent
			\(\mathds{1}\)}_n\ot xb.\\
	\end{split}
	\ee
	In~\cref{decomp} by $\text{\noindent
		\(\mathds{1}\)}_n, \text{\noindent
		\(\mathds{1}\)}_2$ we denote $n \times n$, $2 \times 2$ identity matrices respectively By $\mathbb{I}_n,\mathbb{I}_2$ we denote $n \times n, 2\times 2$ matrices filled with ones only. Matrix $\mathbb{I}_n$ has only one non-zero eigenvalue equal to $n$. Because all terms in the decomposition given by~\cref{decomp} commute we can write positivity conditions from~\eqref{X},~\eqref{Y} using components of~\eqref{decomp}:
	\be
	\label{or}
	\begin{split}
	\lambda(X)&=x\lambda(b)+2p\lambda(a)-p\lambda(a)\geq 0,\\ \lambda(X)&=x\lambda(b)-p\lambda(a)\geq 0,\\
	\lambda(Y)&=p\lambda(a)+nx\lambda(b)-x\lambda(b)\geq 0,\\ \lambda(Y)&=p\lambda(a)-x\lambda(b)\geq 0.
\end{split}
	\ee
	First and the third inequality in~\eqref{or} are always satisfied. Non-trivial condition we get from the second and the fourth inequality getting
	\be
	p\lambda(a)-\frac{q}{N(d_A-1)}\lambda(b)= 0.
	\ee
	Using constraint $p+q=1$ and $D=\frac{1}{2}d_A(d_A-1)$ we derive the parameter $q$:
	\be
	q=\frac{1}{1+\frac{1}{ND}\frac{\lambda(b)}{\lambda(a)}}.
	\ee
	Inserting explicit form of eigenvalues $\lambda(a),\lambda(b)$ given in~\eqref{spectra} we get:
	\be
	q=\frac{1}{1+\frac{d_B^n}{N(d_A-1)}},
	\ee
	where the number $N$ is given in expression~\eqref{N}.
	This finishes the proof.
\end{proof}
The next goal of this section is to compute the trace distance between set of the separable states $\mathcal{\overline{SEP}}$ defined as in~\Cref{method}. Namely, we have the following
\begin{lemma}
	\label{fromsep}
The trace distance between class of multipartite states given by
\be
\label{bla}
\rho=p\rho^{(0)}+\frac{q}{ND}\sum_{l=1}^{ND}\rho^{(l)},
\ee
with sub-blocks from~\eqref{s1}  is bounded from below as
\be
\operatorname{dist}\left(\rho,\mathcal{\overline{SEP}} \right)\geq 1-\frac{1}{d_A^n}-\frac{1}{1+\frac{d_B^n}{N(d_A-1)}}.
\ee
 In the above we take $D=\frac{1}{2}d_A(d_A-1)$, $N=2^{n-1}-1$.
\end{lemma}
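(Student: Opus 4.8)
The plan is to lower-bound the distance through the dual (witness) form of the trace distance. For any Hermitian $W$ with $||W||_{\infty}\le 1$ one has $\operatorname{dist}(\rho,\mathcal{\overline{SEP}})\ge\tfrac12\big(\tr(W\rho)-\max_{\sigma\in\mathcal{\overline{SEP}}}\tr(W\sigma)\big)$, and choosing $W=2P-\mathds{1}$ for a projector $P$ turns this into $\operatorname{dist}(\rho,\mathcal{\overline{SEP}})\ge\tr(P\rho)-\max_{\sigma\in\mathcal{\overline{SEP}}}\tr(P\sigma)$. The projector I would take is the one onto a maximally entangled state of Schmidt rank $d_A^n$ across the cut $\overline{A}:\overline{B}$, but read in the \emph{twisted} basis fixed by the partial transpositions $\tau_{\boldsymbol{\alpha}}$ and the coherent shield blocks $b$ of~\eqref{s1}. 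In other words $\rho$ is built to be a PPT private state whose key register is the whole of $\overline{A}$, of dimension $d_A^n$, and $P$ projects onto that key after untwisting. I would first observe that a naive triangle-inequality reduction to $\rho^{(0)}$ (via Lemma~\ref{q}) is useless here, since $\rho^{(0)}$ is a product state across $\overline{A}:\overline{B}$ (its $\overline{A}$-part is pure, its $\overline{B}$-part maximally mixed) and is therefore already separable; the twist is exactly what carries the information that $\rho$ is nonetheless far from $\mathcal{\overline{SEP}}$.

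Two quantities then have to be evaluated. The separable figure of merit is $\max_{\sigma\in\mathcal{\overline{SEP}}}\tr(P\sigma)=\tfrac{1}{d_A^n}$: this is the standard maximal overlap of a rank-$d_A^n$ maximally entangled projector with states separable across $\overline{A}:\overline{B}$ (when $d_B\ge d_A$, so that the $d_A^n$-dimensional key embeds into $\overline{B}$), and the crucial point is that conjugating by a shield-controlled unitary does not raise it. The state overlap is $\tr(P\rho)$: inserting $\rho=p\rho^{(0)}+\tfrac{q}{ND}\sum_l\rho^{(l)}$ and evaluating block by block with the explicit $a,b$ of~\eqref{s1}, the diagonal weight $p$ on $\rho^{(0)}$ together with the coherences carried by the $\rho^{(l)}$, whose magnitudes were pinned to the PPT boundary in Lemma~\ref{sq}, combine to give $\tr(P\rho)\ge 1-q$.

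Putting the two together yields $\operatorname{dist}(\rho,\mathcal{\overline{SEP}})\ge(1-q)-\tfrac{1}{d_A^n}=1-\tfrac{1}{d_A^n}-q$, and substituting $q=\big(1+\tfrac{d_B^n}{N(d_A-1)}\big)^{-1}$ from Lemma~\ref{sq}, with $N=2^{n-1}-1$ and $D=\tfrac12 d_A(d_A-1)$, gives the stated inequality. Since this is only a lower bound it is enough to establish it against the \emph{largest} candidate set, namely bi-separability across $\overline{A}:\overline{B}$; fully separable states are a subset, so the same bound holds a fortiori, which is why the restriction to bi-separability recorded in Section~\ref{two} costs nothing.

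The hard part will be the separable bound $\max_{\sigma}\tr(P\sigma)=\tfrac{1}{d_A^n}$ for the \emph{twisted} projector. This is precisely the step separating ``far from separable'' from ``detected by an ordinary witness'': for a PPT state the overlap with an \emph{untwisted} maximally entangled projector is itself at most $\tfrac{1}{d_A^n}$ and would yield nothing, so the whole content sits in showing that the twist helps $\rho$ without helping any $\sigma\in\mathcal{\overline{SEP}}$. I would handle it by following the private-state estimates of~\cite{Rut,Badziag_pbits}: write the controlled twist explicitly, reduce $\tr(P\sigma)$ to the fully entangled fraction of $\sigma$ expressed in the untwisted basis, and invoke the standard $\tfrac{1}{d}$ bound with $d=d_A^n$. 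Once the twist is written down, the evaluation of $\tr(P\rho)$ is a routine bookkeeping of the $a$- and $b$-blocks.
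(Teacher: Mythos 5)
Your route is genuinely different from the paper's, and as written it is not yet a proof. The paper's argument is a two-line triangle inequality: it imports from~\cite{Badziag_pbits} the bound $\operatorname{dist}(\rho^{(0)},\mathcal{\overline{SEP}})\geq 1-1/d_A^n$ (Eq.~\eqref{sepB}), writes $\|\rho-\sigma_{sep}\|_1\geq\|\sigma_{sep}-\rho^{(0)}\|_1-\|\rho-\rho^{(0)}\|_1$ for the closest separable $\sigma_{sep}$, and then substitutes Lemma~\ref{q} and Lemma~\ref{sq}. You reject this route on the grounds that, with the blocks of~\eqref{s1}, $\rho^{(0)}=\bigl(\sum_{ij}E_{ij}^{\otimes n}\bigr)\otimes a=|\mathrm{GHZ}\rangle\langle\mathrm{GHZ}|_{\overline{A}}\otimes\mathds{1}_{\overline{B}}/d_B^{n}$ is a product operator across $\overline{A}:\overline{B}$. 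That observation is substantive: it puts real pressure on the paper's unproved step~\eqref{sepB}, since the only entanglement in $\rho^{(0)}$ is a Schmidt-rank-$d_A$ GHZ state confined to $\overline{A}$, and you deserve credit for noticing this rather than silently reproducing the citation.

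However, your replacement does not close the gap; it relocates it. The two quantitative claims your witness argument needs, namely $\max_{\sigma\in\mathcal{\overline{SEP}}}\tr(P\sigma)=1/d_A^{n}$ and $\tr(P\rho)\geq 1-q$ for one and the same twisted projector $P$ of Schmidt rank $d_A^{n}$ across $\overline{A}:\overline{B}$, are both asserted rather than derived, and the second cannot hold for this construction. A rank-$d_A^{n}$ maximally entangled projector requires a $d_A^{n}$-dimensional register on the $\overline{B}$ side correlated with $\overline{A}$, but in~\eqref{form} with~\eqref{s1} the only $\overline{A}$--$\overline{B}$ correlations sit in the shield blocks, and those of $\rho^{(0)}$ are all equal to $a\propto\mathds{1}$: a key-controlled twist $\sum_{k}|k\rangle\langle k|\otimes U_k$ cannot manufacture correlations that are absent, so $\tr(P\rho^{(0)})$ is of order $1/d_B^{n}$, not of order $1$. (Relatedly, the $\rho^{(l)}$ carry total weight $q$, not $1-q$, so they cannot supply the missing overlap either.) The ``hard part'' you defer to the private-state estimates of~\cite{Rut,Badziag_pbits} is precisely where those estimates fail to transfer, because the key index $i$ in~\eqref{max_form} runs over only $d_A$ values and both tensor factors of each $E_{ij}^{\otimes n}$ sit on the same side of the cut. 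So the proposal, as it stands, establishes neither the lemma nor a corrected version of it; to make either argument rigorous one must first pin down the cut with respect to which $\rho^{(0)}$ is genuinely far from separable and prove the analogue of~\eqref{sepB} for it.
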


\begin{proof}
Adopting the result from~\cite{Badziag_pbits} we arrive at:
\be
\label{sepB}
\operatorname{dist}\left(\rho^{(0)}, \mathcal{\overline{SEP}}\right) \geq 1-\frac{1}{d_A^n}.
\ee
Now let us take the closest separable state $\sigma_{sep}$ to the state $\rho$ given in~\cref{form} with~\cref{s1}, then we have
\be
\label{triangle}
\begin{split}
&||\rho-\sigma_{sep}||_1+||\rho-\rho^{(0)}||_1 \geq \\ &||\sigma_{sep}-\rho^{(0)}||_1\geq 1-\frac{1}{d_A^n},
\end{split}
\ee
so
\be
||\rho-\sigma_{sep}||_1 \geq 1-\frac{1}{d_A^n}-||\rho-\rho^{(0)}||_1.
\ee
In the next step we can use the statement of~\Cref{sq} taking the worst possible choice of $q=1/(1+d_B^n/N(d_A-1))$
\be
\label{final}
\begin{split}
||\rho-\sigma_{sep}||_1\geq 1-\frac{1}{d_A^n}-\frac{1}{1+\frac{d_B^n}{N(d_A-1)}}.
\end{split}
\ee
This finishes the proof.
\end{proof}

\begin{remark}
One can see that since $\overline{\mathcal{SEP}}\subset \mathcal{SEP}$, expression~\eqref{bla} from Lemma~\ref{fromsep} gives us lower bound on the trace distance also from fully separable states as well as from any set of partially separable states.
\end{remark}

Let us notice, that for the generic case for qubits states i.e. $n=2, d_A=2, d_B=2$ our bound achieves minimum.
At the end of this section lest us consider the scenario in which the multipartite PPT sate $\rho$ acts on $\mathcal{B}(\mathbb{C}^d \ot \mathbb{C}^d)$, where $d=d_A^nd_B^n$. Now we show that states from our class can be arbitrary far from the set of separable states $\mathcal{\overline{SEP}}$ in the mentioned cut for the fixed number of subsystems $n$.  Namely we have the following:
\begin{theorem}
	\label{maxdist}
	For every $\epsilon >0 $ there exists PPT state $\rho$ given by~\cref{form} with~\cref{s1} acting on $\mathcal{B}(\mathbb{C}^d \ot \mathbb{C}^d)$, such that
	\be
	\label{con1}
	\operatorname{dist}\left(\rho, \mathcal{\overline{SEP}}\right) \geq 1-\epsilon,
	\ee
	for $d\leq C(n)/\epsilon^{2+1/n}$, where $C(n)=8N\sqrt[n]{2}$.
\end{theorem}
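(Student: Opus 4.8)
The plan is to feed the lower bound of \Cref{fromsep},
\[
\operatorname{dist}\left(\rho, \mathcal{\overline{SEP}}\right) \geq 1-\frac{1}{d_A^n}-\frac{1}{1+\frac{d_B^n}{N(d_A-1)}},
\]
into the target inequality~\eqref{con1}: it suffices to choose the local dimensions $d_A,d_B$ so that the two subtracted terms sum to at most $\epsilon$. The cleanest route is to drive each term below $\epsilon/2$ separately. The first term obeys $1/d_A^n\le \epsilon/2$ as soon as $d_A^n\ge 2/\epsilon$, while the second term is governed by the elementary equivalence
\[
\frac{1}{1+\frac{d_B^n}{N(d_A-1)}}\le \frac{\epsilon}{2}\iff d_B^n\ge N(d_A-1)\!\left(\frac{2}{\epsilon}-1\right).
\]

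First I would fix $d_A=\lceil (2/\epsilon)^{1/n}\rceil$, the least integer with $d_A^n\ge 2/\epsilon$, and then, with this value in hand, fix $d_B=\lceil\,(N(d_A-1)(2/\epsilon-1))^{1/n}\,\rceil$. These are legitimate integer dimensions, and for the relevant regime of small $\epsilon$ they automatically satisfy $d_A,d_B\ge 2$, so that the construction of \Cref{method} and the value of $q$ furnished by \Cref{sq} are nondegenerate; recall that $\rho$ is PPT in every cut by construction, so no additional positivity verification is required here. With these choices both error terms are at most $\epsilon/2$, and \Cref{fromsep} immediately yields $\operatorname{dist}(\rho,\mathcal{\overline{SEP}})\ge 1-\epsilon$, which is~\eqref{con1}.

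It then remains to bound the total dimension $d=d_A^n d_B^n=(d_A d_B)^n$. Discarding the ceilings and using the crude estimates $d_A-1<d_A$ and $2/\epsilon-1<2/\epsilon$ gives the relaxed bound
\[
d_A^n d_B^n \le \frac{2}{\epsilon}\cdot N\,(2/\epsilon)^{1/n}\cdot\frac{2}{\epsilon}=4N\,2^{1/n}\,\epsilon^{-(2+1/n)},
\]
so the scaling $d\propto \epsilon^{-(2+1/n)}$ with leading constant $4N\sqrt[n]{2}$ appears directly. The factor two separating this from $C(n)=8N\sqrt[n]{2}$ is exactly the slack reserved to absorb the rounding up to integers.

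The one genuinely delicate point, which I would argue carefully, is precisely this integer rounding and its uniformity in $\epsilon$: writing $d_A=\lceil(2/\epsilon)^{1/n}\rceil$ and similarly for $d_B$ inflates $d_A^n$ and $d_B^n$ above their relaxed values, and for $\epsilon$ near the top of its range this inflation can be as large as a factor $2^n$. For small $\epsilon$, however, the bases $(2/\epsilon)^{1/n}$ and $(N(d_A-1)(2/\epsilon-1))^{1/n}$ tend to infinity, so $\lceil x\rceil/x\to 1$ and the combined inflation of $(d_A d_B)^n$ stays below $2$; this is what the doubled constant covers. I would therefore state the estimate for all sufficiently small $\epsilon>0$ (the regime of interest, since the point is to let $\epsilon\to 0$), noting that enlarging $C(n)$ if necessary would cover the remaining finitely many cases. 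Everything else is the direct substitution displayed above.
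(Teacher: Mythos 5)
Your proof follows essentially the same route as the paper's: substitute the bound of Lemma~\ref{fromsep}, force each of the two error terms below $\epsilon/2$, solve for $d_A^n$ and $d_B^n$, and bound $d=(d_Ad_B)^n$ by $4N\sqrt[n]{2}\,\epsilon^{-(2+1/n)}\le C(n)\epsilon^{-(2+1/n)}$. The only difference is that you handle the integer rounding of $d_A,d_B$ explicitly (the paper simply sets $d_A^n=2/\epsilon$ and $d_B^n=N\frac{2-\epsilon}{\epsilon}(\sqrt[n]{2/\epsilon}-1)$ exactly, ignoring integrality), which makes your version slightly more careful at the price of asserting the stated constant only for sufficiently small $\epsilon$.
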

\begin{proof}
In order to show existence of PPT state which satisfies condition from~\cref{con1} it is enough to consider the worst case scenario by taking $\epsilon/2=1/d_A^n$, and $\epsilon/2=1/(1+d_B^n/(N(d_A-1)))$. Then we get
\be
\begin{split}
d_A^n&=\frac{2}{\epsilon},\\ d_B^n&=N\frac{2-\epsilon}{\epsilon}\left(\sqrt[n]{\frac{2}{\epsilon}}-1 \right).
\end{split}
\ee
Now computing $d=d_k^nd_s^n$ we have
\be
d=d_k^nd_s^n=4N\frac{2-\epsilon}{\epsilon^2}\left(\sqrt[n]{\frac{2}{\epsilon}}-1 \right)\leq \frac{C(n)}{\epsilon^{2+1/n}},
\ee
where $C(n)=8N\sqrt[n]{2}$.
\end{proof}

\begin{remark}
	\label{rem10}
To achieve an arbitrary large distance we do not have to consider so called tensoring (or boosting - taking many copies of state) introduced in~\cite{Badziag_pbits}, it is enough only to increase dimension  $d_B$ for fixed dimension  $d_A$. This is generalizing of theorem given in the paper~\cite{Rut}.
\end{remark}

\begin{remark}
	Theorem~\ref{maxdist} can be easily generalized to different cut, i.e. states acting on $\mathcal{B}(\mathbb{C}^d \ot \mathbb{C}^{d'})$, where $d'\neq d$.
\end{remark}

\section{Summary}
In this paper we present a wide class of multipartite entangled density operators which are PPT invariant with respect to every cut, arbitrary far from the set of separable states. Our method of construction in based on mixing multipartite states defined on orthogonal spaces and imposing set of constraints on positive semidefiniteness of the final mixed state before and after the partial transpositions.  We discuss distance of any state from our class from the set of separable states as a function of mixing parameter $q$. Afterwards we show, that we are able to construct special, but non-trivial subclass of states, for which we evaluate mentioned distance as the function of dimension $d_B$. As we stated in Remark~\ref{rem10} the novelty of our approach lies in the fact, that to boost the distance between states from our class and the set of separable states we do not have to use many copies of them like in the previous approaches.
 Namely we prove, that mentioned distance decreases with growing dimension $d_B$ for fixed $n$. Speaking more precisely, we show that for every $\epsilon >0$, we can find entangled PPT state from our class which is $1-\epsilon$ far from the set of separable states. The scaling of $\epsilon$ with the dimension is  $d\propto 1/\epsilon^{2+1/n}$.
\section*{Acknowledgments}
 MS is supported by grant "Mobilno{\'s}{\'c} Plus IV", 1271/MOB/IV/2015/0 from Polish Ministry of Science and Higher Education. 
\bibliographystyle{unsrt}
\bibliography{References6}

\begin{thebibliography}{1}

\bibitem{Badziag_pbits}
P.~{ Badzi\c{a}g}, K.~{Horodecki}, M.~{Horodecki}, J.~{Jenkinson}, and S.~J.
  {Szarek}.
\newblock {Bound entangled states with extremal properties}.
\newblock {\em PRA}, 90(1):012301, July 2014.

\bibitem{KH_phd}
K.~{Horodecki}.
\newblock {\em {General paradigm for distilling classical key from quantum
  states - on quantum entanglement and security}}.
\newblock PhD thesis, University of Warsaw, Faculty of Mathematics, Informatics
  and Mechanics, 2008.

\bibitem{Rut}
A.~{Rutkowski}, M.~{Studzi{\'n}ski}, P.~{{\'C}wikli{\'n}ski}, and
  M.~{Horodecki}.
\newblock {Construction and properties of a class of private states in
  arbitrary dimensions}.
\newblock {\em PRA}, 91(1):012335, January 2015.

\bibitem{Glaucia}
Fabien Clivaz, Marcus Huber, Ludovico Lami, and Gláucia Murta.
\newblock Genuine-multipartite entanglement criteria based on positive maps.
\newblock {\em Journal of Mathematical Physics}, 58(8):082201, 2017.

\bibitem{Huber}
Marcus Huber, Ludovico Lami, C\'ecilia Lancien, and Alexander M\"uller-Hermes.
\newblock High-dimensional entanglement in states with positive partial
  transposition.
\newblock {\em Phys. Rev. Lett.}, 121:200503, Nov 2018.

\end{thebibliography}
\end{document}